\documentclass[sigconf,nonacm]{acmart}
\usepackage{latexsym}
\usepackage{soul}
\usepackage{graphicx}
\usepackage{multirow}
\usepackage{xcolor}
\usepackage{tabularx}
\usepackage{booktabs}
\usepackage{mathtools}
\usepackage{subcaption}
\usepackage[linesnumbered,ruled]{algorithm2e}
\usepackage{makecell}
\usepackage{bbm}

\newcommand{\ctext}[3][RGB]{%
  \begingroup
  \definecolor{hlcolor}{#1}{#2}\sethlcolor{hlcolor}%
  \hl{#3}%
  \endgroup
}

\newcommand{\cpu}[1]{\ctext[RGB]{198,219,239}{#1}}
\newcommand{\gpu}[1]{\ctext[RGB]{168,221,181}{#1}}
\newcommand{\hide}[1]{\textcolor{gray}{#1}}

\newcommand{\trecdldf}{\textsc{Doc'19}}
\newcommand{\trecdlds}{\textsc{Doc'20}}
\newcommand{\trecdlp}{\textsc{Passage'19}}

\newcommand{\sparseretrieval}{\textsc{Sparse Retrieval}}
\newcommand{\denseretrieval}{\textsc{Dense Retrieval}}
\newcommand{\hybrid}{\textsc{Hybrid Retrieval}}
\newcommand{\reranking}{\textsc{Re-Ranking}}
\newcommand{\interpolatedreranking}{\textsc{Interpolation}}
\newcommand{\fastforward}{\textsc{Fast-Forward}}

\newcommand{\bm}{\textsc{BM25}}
\newcommand{\deepct}{\textsc{DEEP-CT}}
\newcommand{\tct}{\textsc{TCT-ColBERT}}
\newcommand{\ance}{\textsc{ANCE}}
\newcommand{\clear}{\textsc{CLEAR}}
\newcommand{\cls}{\textsc{BERT-CLS}}

\newcommand{\pyserini}{\textsc{Pyserini}}

\copyrightyear{2022}
\acmYear{2022}
\setcopyright{acmcopyright}\acmConference[WWW '22]{Proceedings of the ACM Web Conference 2022}{April 25--29, 2022}{Virtual Event, Lyon, France}
\acmBooktitle{Proceedings of the ACM Web Conference 2022 (WWW '22), April 25--29, 2022, Virtual Event, Lyon, France}
\acmPrice{15.00}
\acmDOI{10.1145/3485447.3511955}
\acmISBN{978-1-4503-9096-5/22/04}

\begin{document}
\title{Efficient Neural Ranking using Forward Indexes}

\author{Jurek Leonhardt}
\affiliation{
  \institution{L3S Research Center}
  \city{Hannover}
  \country{Germany}
}
\email{leonhardt@L3S.de}

\author{Koustav Rudra}
\affiliation{
  \institution{Indian Institute of Technology\\(Indian School of Mines)}
  \city{Dhanbad}
  \country{India}
}
\email{koustav@iitism.ac.in}
\authornote{Research was primarily conducted while affiliated to L3S Research Center.}

\author{Megha Khosla}
\affiliation{
  \institution{L3S Research Center}
  \city{Hannover}
  \country{Germany}
}
\email{khosla@L3S.de}

\author{Abhijit Anand}
\affiliation{
  \institution{L3S Research Center}
  \city{Hannover}
  \country{Germany}
}
\email{aanand@L3S.de}

\author{Avishek Anand}
\affiliation{
  \institution{L3S Research Center}
  \city{Hannover}
  \country{Germany}
}
\email{anand@L3S.de}

\begin{abstract}
Neural document ranking approaches, specifically transformer models, have achieved impressive gains in ranking performance.
However, query processing using such over-parameterized models is both resource and time intensive.
In this paper, we propose the \fastforward{} index -- a simple vector forward index that facilitates ranking documents using interpolation of lexical and semantic scores -- as a replacement for contextual re-rankers and dense indexes based on nearest neighbor search.
\fastforward{} indexes rely on efficient sparse models for retrieval and merely look up pre-computed dense transformer-based vector representations of documents and passages in constant time for fast CPU-based semantic similarity computation during query processing.
We propose index pruning and theoretically grounded early stopping techniques to improve the query processing throughput.
We conduct extensive large-scale experiments on TREC-DL datasets and show improvements over hybrid indexes in performance and query processing efficiency using only CPUs.
\fastforward{} indexes can provide superior ranking performance using interpolation due to the complementary benefits of lexical and semantic similarities.
\end{abstract}

\begin{CCSXML}
<ccs2012>
   <concept>
       <concept_id>10002951.10003317.10003338</concept_id>
       <concept_desc>Information systems~Retrieval models and ranking</concept_desc>
       <concept_significance>500</concept_significance>
       </concept>
 </ccs2012>
\end{CCSXML}

\ccsdesc[500]{Information systems~Retrieval models and ranking}

\keywords{retrieval, dense, sparse, ranking, interpolation}

\maketitle
\section{Introduction}
\label{sec:intro}
The standard approach for ad-hoc document ranking employs a retrieval phase for fast, high-recall candidate selection and a more expensive re-ranking phase.
Recent approaches have focused heavily on neural transformer-based models for both retrieval~\cite{lin2019neural,khattab2020colbert,dai_first_webconf_2020} and re-ranking~\cite{dai_sigir_2019,macavaney2019cedr,lin_emnlp_2019,hofstatter2020interpretable,hofstatter2021efficiently,li2020parade}.
However, limited work has been done in the context of efficient end-to-end solutions for ranking long documents.

There are challenges in both retrieval and re-ranking of long documents. 
The predominant strategy for the retrieval stage is based on \emph{term-based} or \emph{lexical} matching. 
Towards efficient retrieval, inverted indexes, referred to as \emph{sparse indexes}, have been employed as work horses in traditional information retrieval, exploiting sparsity in the term space for pruning out a large number of irrelevant documents.
Sparse indexes are known to suffer from the vocabulary mismatch problem, causing them to rank semantically similar documents low.
To alleviate this, \emph{dense indexes} have been proposed recently in the context of passage retrieval.
However, we find that recall considerably suffers when retrieving longer documents, i.e.\ when there are multiple vectors per document.
For example, in one case the recall for retrieving 1000 documents using a dense index is 0.58, compared to 0.70 for a sparse index (refer to Table~\ref{tab:model_eval}).

For the re-ranking stage, cross-attention models are preferred due to their ability to estimate semantic similarity, but these models are computationally expensive.
Consequently, to keep overall end-to-end ranking costs manageable, either a smaller number of documents is considered in the re-ranking phase or leaner models with fewer parameters are used, resulting in reduced ranking performance.

The aim of this paper is to propose an efficient end-to-end approach for ranking long documents without compromising effectiveness.
Firstly, we observe that dual-encoder models can be used to effectively compute the semantic similarity of a document given a query (cf.\ Table~\ref{table:interpolation_vs_reranking}).
Based on this observation, we propose a forward indexing approach that ranks documents based on interpolation, scoring them using a linear combination of the semantic similarity scores (from the re-ranking phase) and lexical matching scores (from the retrieval phase) with respect to the query~\cite{lin_emnlp_2019}.

Our proposed \emph{vector forward index} of documents, referred to as the \fastforward{} index, contains a list of pre-computed vectors for each document corresponding to its constituent passages.
Query processing using a \fastforward{} index entails retrieving documents using a sparse index, computing the semantic similarities of the retrieved documents through a sequence of index look-ups and dot products and interpolating the scores.
As a result, \fastforward{} indexes combine the benefits of both sparse and dense models, while at the same time eliminating the need for expensive nearest neighbor search (as used in dense retrieval) or contextual, GPU-accelerated re-ranking. Consequently, our work is complementary to and can be combined with other techniques that aim to improve retrieval, e.g.\ \textsc{CLEAR}~\cite{gao2020complementing} or \textsc{docT5query}~\cite{nogueira2019docT5query}.
Further, this allows us to re-rank a much higher number of documents per query, alleviating the aforementioned vocabulary mismatch problem of the sparse retriever.

Our second observation is that query processing cost in the re-ranking phase is dominated by dot-product computations between the query and the passage vectors.
Towards this, we propose two techniques to improve efficiency in query processing using \fastforward{} indexes -- \emph{sequential coalescing} and \emph{early stopping}.
In sequential coalescing, we substantially reduce the number of vectors per document by combining representations corresponding to adjacent yet similar passages.
This not only improves the query processing efficiency, but also reduces the memory footprint of the index.
Early stopping exploits the natural ordering of the sparse scores to avoid unnecessary index look-ups by maintaining a maximum score estimate for the dense scores.

We perform extensive experimental evaluation to show the performance benefits of \fastforward{} indexes and our query processing techniques. We find that interpolation using dual-encoder models consistently yields better performance than standard re-ranking using the same models. Further, increasing the sparse retrieval depth prior to interpolation improves the final ranking. Finally, we show how optimized \fastforward{} indexes accelerate the ranking, substantially decreasing the query processing time compared to hybrid retrieval, while maintaining comparable performance. To sum up, we make the following contributions:
\begin{itemize}
    \item We propose \fastforward{} indexes as an efficient approach for ad-hoc document ranking tasks.
    \item We propose novel query processing algorithms -- sequential coalescing and early stopping -- that further improve the overall query processing throughput.
    \item We perform extensive experimental evaluation to establish strong efficiency gains due to our forward indexes and query processing techniques.
\end{itemize}

\section{Related Work}
\label{sec:related}
Classical ranking approaches, such as \bm{} or the query likelihood model~\cite{Lavrenko_2001}, rely on the inverted index for efficient first-stage retrieval that stores term-level statistics like term-frequency, inverse document frequency and positional information.
We refer to this style of retrieval as sparse retrieval, since it assumes sparse document representations.
Recently, \citet{dai_first_webconf_2020, dai_context_passage_2020} proposed \deepct{}, which stores contextualized scores for terms in the inverted index for both passage and document retrieval. 
Similarly, \textsc{DeepImpact}~\cite{mallia2021learning} enriches the document collection with expansion terms to learn improved term impacts.
\textsc{Splade}~\cite{formal2021splade} aims to enrich sparse document representations using a trained contextual transformer model and sparsity regularization on the term weights.
\textsc{TILDE}~\cite{zhuang2021tilde} ranks documents using a deep query and document likelihood model.
In this work, we use the vanilla inverted index with standard term statistics for first-stage retrieval.

An alternative design strategy is to use dual-encoders to learn dense vector representations for passages or documents using contextual models~\cite{karpukhin2020dense,khattab2020colbert,gao2021coil}.
The dense vectors are then indexed in an offline phase~\cite{johnson2019billion:faiss}, where retrieval is akin to performing an approximate nearest neighbor (ANN) search given a vectorized query.
Consequently, there has been a large number of follow-up works that boost the performance of dual-encoder models by improving pre-training~\cite{chang2020pre}, optimization techniques~\cite{gao2020complementing} and negative sampling~\cite{prakash_dense_retrieval_2021,xiong2021approximate}.
In this work, we use dual-encoders for computing semantic similarity between queries and passages. 
Some approaches have also proposed architectural modifications to the dual-encoder models by having lightweight aggregations between the query and passage embeddings~\cite{chen2021co,jang2021uhd,hofstatter2021efficiently,lin2020distilling,lin2021pyserini,zhan2021optimizing,xiong2021approximate}, showing promising performance over standard term-based retrieval strategies.
\citet{nogueira2019docT5query} proposed a simple document expansion model. Note that these approaches are complementary to our work, as they can be combined with \fastforward{} indexes.

\paragraph{Models for Semantic Similarity}
While lexical matching models are typically employed in the first-stage retrieval and are known to achieve high recall, the ability to accurately determine semantic similarity is essential in the subsequent more involved and computationally expensive re-ranking stage to alleviate the vocabulary mismatch problem~\cite{mitra2019incorporating,dai2019evaluation,dai_context_passage_2020,macavaney2020expansion,mackenzie2020efficiency}. 
Computing the semantic similarity of a document given a query has been heavily researched in IR using smoothing methods~\cite{lafferty2001document}, topic models~\cite{wei2006lda}, embeddings~\cite{mitra2016dual}, personalized models~\cite{luxenburger2008matching} and other techniques.
In these classical approaches, ranking is performed by interpolating the semantic similarity scores with the lexical matching scores from the first-stage retrieval. 
Recent approaches have been dominated by over-parameterized contextual models used predominantly in re-ranking~\cite{dai_sigir_2019,macavaney2019cedr,lin_emnlp_2019,hofstatter2020interpretable,hofstatter2021efficiently,li2020parade}.
Unlike dual-encoder models used in dense retrieval, most of the above ranking models take as input a concatenation of the query and document.
This combined input results in higher query processing times for large retrieval depths since each document has to be processed in conjugation with the query string. 
Another key limitation of using contextual models for document ranking is the maximum acceptable number of input tokens for transformer models. 
Some strategies address this length limitation by document truncation~\cite{macavaney2019cedr} and chunking into passages~\cite{dai_sigir_2019,rudra_distant_cikm_2020}. However, the performance of chunking-based strategies depends on the chunking properties, i.e.\ passage length or overlap among consecutive passages~\cite{rudra_passage_chunking_2021}. 
Recent proposals include a two-stage approach, where a query-specific summary is generated by selecting relevant parts of the document, followed by re-ranking strategies over the query and summarized document~\cite{li_keyblock_2021,hofstatter_cascade_2021}.

\paragraph{Interpolation-based Rankers}
Unlike classical methods, where score interpolation is the norm, semantic similarity using a contextual model is not consistently combined with the matching score.
Recently, \citet{wang_interpolation_2021} showed that the interpolation of BERT-based retrievers and sparse retrieval methods can boost the performance.
Further, they analyzed the role of interpolation in BERT-based dense retrieval strategies (\ance{}, \textsc{RepBERT}) and found that dense retrieval alone is not enough, but interpolation with \bm{} scores is necessary. 

\section{Interpolation-Based Re-ranking}
\label{sec:interpolated_reranking}
In this section we formally introduce the problem of re-ranking. We further compare standard and interpolation-based re-ranking.

\subsection{Problem Statement}
\label{sec:problem}
The retrieval of documents or passages given a query typically happens in two stages: In the first stage, a term-frequency-based (\textbf{sparse}) retrieval method retrieves a set of documents from a large corpus. In the second stage, another model, which is usually much more computationally expensive, \textbf{re-ranks} the retrieved documents again. The re-ranking step is deemed very important for tasks that require high performance for small retrieval depths, such as question answering.

In \textbf{sparse retrieval}, we denote the top-$k_S$ documents retrieved from the sparse index for a query $q$ as $K^q_S$. The sparse score of a query-document pair $(q, d)$ is denoted by $\phi_S(q, d)$. For the \textbf{re-ranking} part, we focus on self-attention models (such as BERT) in this work. These models operate by creating (internal) high-dimensional dense representations of queries and documents, focusing on their semantic structure. We refer to the outputs of these models as \textbf{dense} or \textbf{semantic} scores and denote them by $\phi_D(q, d)$. Due to the quadratic time complexity of self-attention w.r.t.\ the document length, long documents are often split into passages, and the score of a document is then computed as the maximum of its passage scores:
\begin{equation}
    \label{eqn:maxp}
    \phi_D(q, d) = \max_{p_i \in d} {\phi_D(q, p_i)}
\end{equation}
This approach is referred to as \emph{maxP}~\cite{dai_sigir_2019}.

The retrieval approach for a query $q$ starts by retrieving $K^q_S$ from the sparse index. For each retrieved document $d \in K^q_S$, the corresponding dense score $\phi_D(q, d)$ is computed. This dense score may then be used to re-rank the retrieved set to obtain the final ranking. However, recently it has been shown that the scores of the sparse retriever, $\phi_S$, can be beneficial for re-ranking as well~\cite{lin_emnlp_2019}. To that end, an interpolation approach is employed, where the final score of a query-document pair is computed as follows:
\begin{equation}
    \label{eqn:interpolation}
    \phi(q, d) = \alpha \cdot \phi_S(q, d) + (1 - \alpha) \cdot \phi_D(q, d)
\end{equation}
Setting $\alpha = 0$ recovers the standard re-ranking procedure.

Since the set of documents retrieved by the sparse model is typically large (e.g.\ $k_S = 1000$), computing the dense score for each query-document pair can be very computationally expensive. In this paper, we focus on efficient implementations of interpolation-based re-ranking, specifically the computation of the dense scores $\phi_D$.

\subsection{Re-Ranking vs. Interpolation}
\label{sec:reranking_vs_interpolation}
\begin{table}
    \center
    \begin{tabular}{lccc}
        \toprule
                    & \trecdldf     & \trecdlds     & \trecdlp \\
        \cmidrule(lr){1-4}
        \bf \reranking \\
        \tct        & 0.685         & 0.617         & 0.694 \\
        \cls        & 0.520         & 0.522         & 0.578 \\
        \midrule
        \bf \interpolatedreranking \\
        \tct        & \bf 0.696     & \bf 0.637     & \bf 0.708 \\
        \cls        & 0.612         & 0.626         & 0.617 \\
        \bottomrule
    \end{tabular}
    \caption{Performance comparison (nDCG@$10$) between re-ranking and interpolation at retrieval depth $k_S = 1000$.}
    \label{table:interpolation_vs_reranking}
\end{table}
In order to compare traditional re-ranking with interpolation (cf. Section~\ref{sec:problem}), we retrieve $k_S = 1000$ documents from a sparse \bm{} index and compute the corresponding semantic scores for each of the query-document pairs using dense methods. We then re-rank the documents with and without interpolation. The results (cf. Table~\ref{table:interpolation_vs_reranking}) clearly show that interpolation improves over the purely semantic scoring in both document and passage re-ranking. This confirms that the lexical scores indeed hold complementary relevance signals in comparison to semantic scores. We also observe that the dual-encoder approach has better re-ranking performance.

\section{Efficient Interpolation}
\label{sec:efficient_interpolation}
\begin{figure*}
    \begin{subfigure}{.29\linewidth}
      \centering
      \includegraphics[height=0.15\textheight]{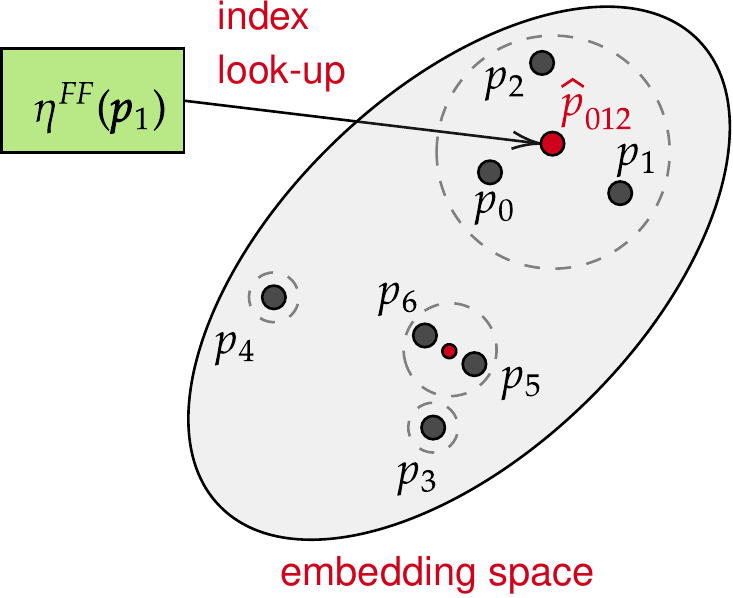}
      \caption{Sequential coalescing of maxP indexes.}
      \label{fig:coalescing}
    \end{subfigure}
    \begin{subfigure}{.7\linewidth}
      \centering
      \includegraphics[height=0.15\textheight]{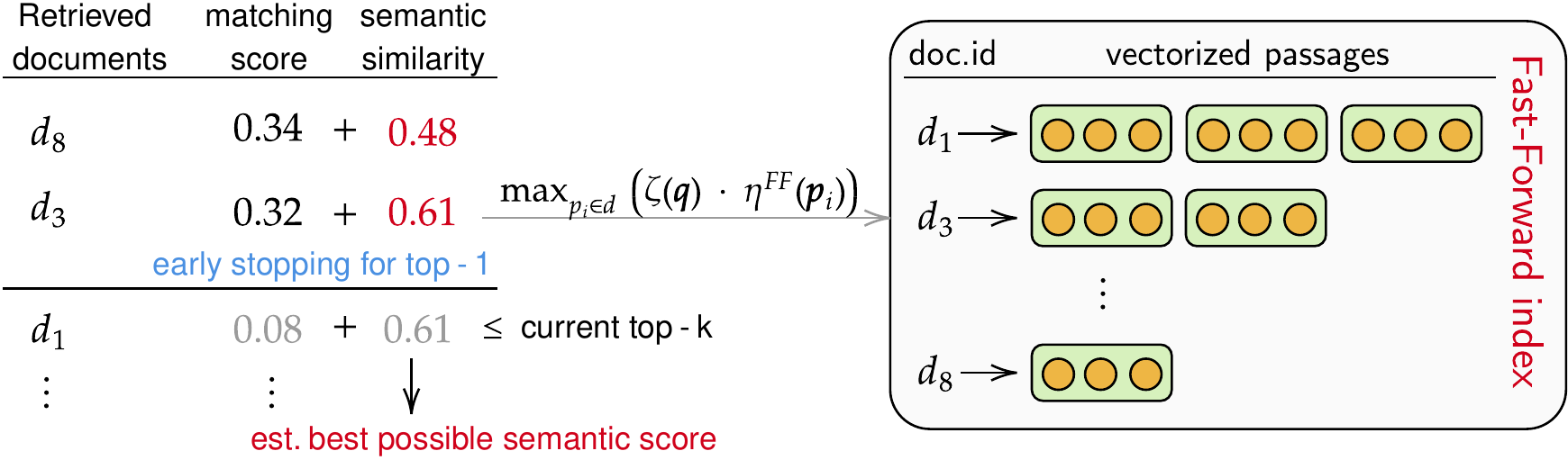}
      \caption{Early stopping during interpolation.}
      \label{fig:early_stopping}
    \end{subfigure}
    \caption{The optimization techniques that can be applied to \fastforward{} indexes. Sequential coalescing combines the representations of similar consecutive passages as their average. Note that $p_3$ and $p_5$ are not combined, as they are not consecutive passages. Early stopping reduces the number of interpolation steps by computing an approximate upper bound for the dense scores. This example depicts the most extreme case, where only the top-$1$ document is required.}
    \label{fig:ff_indexes}
\end{figure*}
Interpolation-based re-ranking, as described in Section~\ref{sec:interpolated_reranking}, is known to improve performance when applied to the results of a first-stage (sparse) retrieval step. However, the computation of $\phi_D$ (cf.\ Equation~\eqref{eqn:interpolation}) can be very slow, where cross-attention models~\cite{rudra_distant_cikm_2020,dai_sigir_2019} are more expensive than dual-encoder-based ranking strategies~\cite{lin2020distilling,chang2020pre}. In this section we propose several means of implementing interpolation-based re-ranking more efficiently.

\subsection{Hybrid Retrieval}
\label{sec:hybrid}
Hybrid retrieval is similar to standard interpolation-based re-ranking (cf. Section~\ref{sec:interpolated_reranking}). The key difference is that the dense scores $\phi_D(q, d)$ are not computed for all query-document pairs. Instead, this approach operates under the assumption that $\phi_D$ is a \textbf{dense retrieval model}, which retrieves documents $d_i$ and their scores $\phi(q, d_i)$ using nearest neighbor search given a query $q$. A hybrid retriever combines the retrieved sets of a sparse and a dense retriever.

For a query $q$, we retrieve two sets of documents, $K^q_S$ and $K^q_D$, using the sparse and dense retriever, respectively. Note that the two retrieved sets are usually not equal.
One strategy proposed in~\cite{lin2020distilling} ranks all documents in $K^q_S \cup K^q_D$, approximating missing scores. In our experiments, however, we found that \textbf{only} considering documents from $K^q_S$ for the final ranking and discarding the rest works well. The final score is thus computed as follows:
\begin{equation}
    \label{eqn:hybrid}
    \phi(q, d) = \alpha \cdot \phi_S(q, d) + (1 - \alpha) \cdot
    \begin{cases}
        \phi_D(q, d)    & d \in K^q_D \\
        \phi_S(q, d)    & d \notin K^q_D
    \end{cases}
\end{equation}
The re-ranking step in hybrid retrieval is essentially a sorting operation over the interpolated scores and takes negligible time in comparison to standard re-ranking.

\subsection{\fastforward{} Indexes}
\label{sec:forward_indexes}
The hybrid approach described in Section~\ref{sec:hybrid} has two distinct disadvantages. Firstly, in order to retrieve $K_D^q$, an (approximate) nearest neighbor search has to be performed, which is time consuming. Secondly, some of the query-document scores are missed, leading to an incomplete interpolation.

In this section we propose \fastforward{} indexes as an efficient way of computing dense scores for known documents that alleviates the aforementioned issues. Specifically, \fastforward{} indexes build upon two-tower dense retrieval models that compute the score of a query-document pair as a dot product
\begin{equation}
    \phi_D(q, d) = \zeta(q) \cdot \eta(d),
\end{equation}
where $\zeta$ and $\eta$ are the query and document encoders, respectively. Examples of such models are \ance{}~\cite{xiong2021approximate} and \tct{}~\cite{lin2020distilling}. Since the query and document representations are independent for two-tower models, we can pre-compute the document representations $\eta(d)$ for each document $d$ in the corpus. These document representations are then stored in an efficient hash map, allowing for look-ups in constant time. After the index is created, the score of a query-document pair can be computed as
\begin{equation}
    \phi_D^{FF}(q, d) = \zeta(q) \cdot \eta^{FF}(d),
\end{equation}
where the superscript $FF$ indicates the look-up of a pre-computed document representation in the \fastforward{} index. At retrieval time, only $\zeta(q)$ needs to be computed once for each query. As queries are usually short, this can be done on CPUs.

\subsection{Index Compression via Seq.\ Coalescing}
\label{sec:coalescing}
A major disadvantage of dense indexes and dense retrieval in general is the size of the final index. This is caused by two factors: Firstly, in contrast to sparse indexes, the high-dimensional dense representations can not be stored as efficiently as sparse vectors. Secondly, the dense encoders are typically transformer-based, imposing a (soft) limit on their input lengths due to their quadratic time complexity with respect to the inputs. Thus, long documents are split into passages prior to indexing (\emph{maxP} indexes).

\begin{algorithm}[t]
    \DontPrintSemicolon
    \SetKwFunction{Dist}{cosine\_distance}
    \SetKwFunction{Mean}{mean}
    \SetKw{In}{in}
    \KwIn{list of passage vectors $P$ (original order) of a document, distance threshold $\delta$}
    \KwOut{coalesced passage vectors $P'$}
    $P' \leftarrow$ empty list\;
    $\mathcal{A} \leftarrow \emptyset$\;
    \ForEach{$v$ \In $P$}{
        \uIf{first iteration}{
            \tcp{do nothing}
        }
        \uElseIf{$\Dist(v, \overline{\mathcal{A}}) \geq \delta$}{
            append $\overline{\mathcal{A}}$ to $P'$\;
            $\mathcal{A} \leftarrow \emptyset$\;
        }
        add $v$ to $\mathcal{A}$\;
        $\overline{\mathcal{A}} \leftarrow \Mean(\mathcal{A})$\;
    }
    append $\overline{\mathcal{A}}$ to $P'$\;
    \Return{$P'$}\;
    \caption{Compression of dense maxP indexes by sequential coalescing}
    \label{alg:coalescing}
\end{algorithm}
As an increase in the index size has a negative effect on retrieval latency, both for nearest neighbor search and \fastforward{} indexing as used by our approach, we exploit a \emph{sequential coalescing} approach as a way of dynamically combining the representations of consecutive passages within a single document in maxP indexes. The idea is to reduce the number of passage representations in the index for a single document. This is achieved by exploiting the \emph{topical locality} that is inherent to documents~\cite{leonhardt2020boilerplate}. For example, a single document might contain information regarding multiple topics; due to the way human readers naturally ingest information, we expect documents to be authored such that a single topic appears mostly in consecutive passages, rather than spread throughout the whole document. Our approach aims to combine consecutive passage representations that encode similar information. To that end, we employ the cosine distance function and a \emph{threshold} parameter $\delta$ that controls the degree of coalescing. Within a single document, we iterate over its passage vectors in their original order and maintain a set $\mathcal{A}$, which contains the representations of the already processed passages, and continuously compute $\overline{\mathcal{A}}$ as the average of all vectors in $\mathcal{A}$. For each new passage vector $v$, we compute its cosine distance to $\overline{\mathcal{A}}$. If it exceeds the distance threshold $\delta$, the current passages in $\mathcal{A}$ are combined as their average representation $\overline{\mathcal{A}}$. Afterwards, the combined passages are removed from $\mathcal{A}$ and $\overline{\mathcal{A}}$ is recomputed. This approach is illustrated in Algorithm~\ref{alg:coalescing}. Figure~\ref{fig:coalescing} shows an example index after coalescing. To the best of our knowledge, there are no other forward index compression techniques so far.

\subsection{Faster Interpolation by Early Stopping}
\label{sec:early_stopping}
As described in Section~\ref{sec:interpolated_reranking}, by interpolating the scores of sparse and dense retrieval models we perform implicit re-ranking, where the dense representations are pre-computed and can be looked up in a \fastforward{} index at retrieval time. Further, increasing the sparse retrieval depth $k_S$, such that $k_S > k$, where $k$ is the final number of documents, improves the performance. A drawback of this is that an increase in the number of retrieved documents also results in an increase in the number of index look-ups.

\begin{algorithm}[t]
    \DontPrintSemicolon
    \SetKwFunction{Sparse}{sparse}
    \SetKwFunction{Max}{max}
    \SetKw{In}{in}
    \SetKw{Break}{break}
    \KwIn{query $q$, sparse retrieval depth $k_S$, cut-off depth $k$, interpolation parameter $\alpha$}
    \KwOut{approximated top-$k$ scores $Q$}
    $Q \leftarrow$ priority queue of size $k$\;
    $s_{D} \leftarrow -\infty$\;
    $s_{min} \leftarrow -\infty$\;
    \ForEach{$d$ \In $\Sparse(q, k_S)$}{
        \uIf{$Q$ is full}{
            $s_{min} \leftarrow$ remove smallest item from Q\;
            $s_{best} \leftarrow \alpha \cdot \phi_S(q, d) + (1 - \alpha) \cdot s_{D}$\; \label{alg:early_stopping:sbest}
            
            \uIf{$s_{best} \leq s_{min}$}{
                \tcp{early stopping}
                put $s_{min}$ into $Q$\;
                \Break\;
            }
        }
        \tcp{approximate max. dense score}
        $s_{D} \leftarrow \Max(\phi_D(q, d), s_{D})$\;
        $s \leftarrow \alpha \cdot \phi_S(q, d) + (1 - \alpha) \cdot \phi_D(q, d)$\;
        put $\Max(s, s_{min})$ into $Q$\;
    }
    \Return{$Q$}\;
    \caption{Interpolation with early stopping}
    \label{alg:early_stopping}
\end{algorithm}
In this section we propose an extension to \fastforward{} indexes that allows for \emph{early stopping}, i.e.\ avoiding a number of unnecessary look-ups, for cases where $k_S > k$ by approximating the maximum possible dense score. The early stopping approach takes advantage of the fact that documents are ordered by their sparse scores $\phi_S(q, d)$. Since the number of retrieved documents, $k_S$, is finite, there exists an upper limit $s_D$ for the corresponding dense scores such that $\phi_D(q, d) \leq s_D \forall d \in K^q_S$. Since the retrieved documents $K^q_S$ are ordered by their sparse scores, we can simultaneously perform interpolation and re-ranking by iterating over the ordered list of documents: Let $d_i$ be the $i$-th highest ranked document by the sparse retriever. Recall that we compute the final score as follows:
\begin{equation}
    \phi(q, d_i) = \alpha \cdot \phi_S(q, d_i) + (1 - \alpha) \cdot \phi_D(q, d_i)
\end{equation}
If $i > k$, we can compute the upper bound for $\phi(q, d_i)$ by exploiting the aforementioned ordering:
\begin{equation}
    s_{best} = \alpha \cdot \phi_S(q, d_{i-1}) + (1 - \alpha) \cdot s_D
\end{equation}
In turn, this allows us to stop the interpolation and re-ranking if $s_{best} \leq s_{min}$, where $s_{min}$ denotes the score of the $k$-th document in the current ranking (i.e. the currently lowest ranked document). Intuitively, this means that we stop the computation once the \emph{highest possible} interpolated score $\phi(q, d_i)$ is too low to make a difference. The approach is illustrated in Algorithm~\ref{alg:early_stopping} and Figure~\ref{fig:early_stopping}. Since the dense scores $\phi_D$ are usually unnormalized, the upper limit $s_D$ is unknown in practice. We thus approximate it by using the highest observed dense score at any given step.

\subsubsection{Theoretical Analysis}
We first show that the early stopping criteria, when using the true maximum of the dense scores, is sufficient to obtain the top-$k$ scores.
\begin{theorem}
    Let $s_D$, as used in Algorithm~\ref{alg:early_stopping}, be the true maximum of the dense scores. Then the returned scores are the actual top-$k$ scores.
\end{theorem}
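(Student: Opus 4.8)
\subsubsection*{Proof proposal}

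The plan is to order the retrieved set $K^q_S = \{d_1, \dots, d_{k_S}\}$ so that $\phi_S(q, d_1) \ge \phi_S(q, d_2) \ge \dots \ge \phi_S(q, d_{k_S})$, which is exactly the order in which the \texttt{sparse}$(q, k_S)$ enumeration feeds documents to the loop of Algorithm~\ref{alg:early_stopping}. Throughout I will use the standing assumption $\alpha \in [0,1]$, so that the interpolated score of Equation~\eqref{eqn:interpolation} is monotone non-decreasing in both $\phi_S$ and $\phi_D$, together with the hypothesis of the theorem that $s_D = \max_{d \in K^q_S} \phi_D(q, d)$ is the \emph{exact} maximum dense score rather than the running estimate the algorithm normally uses. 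The loop can terminate in two ways: it either exhausts all $k_S$ documents or it triggers the early-stopping break. The first case is immediate, since without a break the procedure is the textbook size-$k$ min-heap scan that retains the $k$ largest interpolated scores and skips nothing; the entire content of the theorem is that the break case also loses nothing.

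First I would isolate the loop invariant that drives the argument: after $d_1, \dots, d_m$ have been processed (for any $m \ge k$ before a break occurs), the queue $Q$ holds exactly the $k$ largest values of $\{\phi(q, d_1), \dots, \phi(q, d_m)\}$, and the $s_{min}$ extracted at the top of the next iteration is the smallest of those $k$ values. This follows by a short induction on $m$: the insertion of $\max(s, s_{min})$ with $s = \phi(q, d_{m+1})$ either admits the new score when it beats the current minimum or re-inserts $s_{min}$ otherwise, which is precisely the update that preserves the top-$k$ multiset. I would record this as a lemma so the main proof can quote it directly.

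The crux is the break step. Suppose it fires while processing $d_i$. By the invariant, at the moment the guard is evaluated the $k-1$ retained items together with the just-removed $s_{min}$ are the top-$k$ of $\{\phi(q, d_1), \dots, \phi(q, d_{i-1})\}$, with $s_{min}$ their smallest; the break path re-inserts $s_{min}$, so the returned queue is exactly this top-$k$. The guard gives $s_{best} = \alpha \cdot \phi_S(q, d_i) + (1-\alpha)\, s_D \le s_{min}$. I would then bound \emph{every} unprocessed document at once: for each $j \ge i$ the sparse ordering gives $\phi_S(q, d_j) \le \phi_S(q, d_i)$ and the exact-maximum hypothesis gives $\phi_D(q, d_j) \le s_D$, so monotonicity in $\alpha$ yields
\[
\phi(q, d_j) = \alpha \cdot \phi_S(q, d_j) + (1-\alpha)\,\phi_D(q, d_j) \le \alpha \cdot \phi_S(q, d_i) + (1-\alpha)\, s_D = s_{best} \le s_{min}.
\]
Thus no remaining document attains a score strictly above the current $k$-th best, so appending $d_i, \dots, d_{k_S}$ cannot alter which values occupy the top $k$, and the returned queue already holds the true top-$k$ scores.

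I expect the main obstacle to be bookkeeping rather than any deep estimate: pinning down exactly what $Q$ and $s_{min}$ represent at the instant the guard is tested -- in particular the off-by-one between ``already processed'' ($d_1, \dots, d_{i-1}$) and ``currently tested'' ($d_i$) -- and arguing that a single guard check certifies \emph{all} of $d_i, \dots, d_{k_S}$ simultaneously, which is exactly where the sparse-score ordering is indispensable. A secondary point to treat with care is ties: since the claim concerns the top-$k$ \emph{scores} and not a canonical set of documents, a later document with $\phi(q, d_j) = s_{min}$ leaves the $k$ largest values unchanged, so the non-strict inequality in the guard suffices.
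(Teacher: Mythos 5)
Your proposal is correct and follows essentially the same route as the paper's proof: exploit the decreasing sparse-score order, maintain the invariant that $Q$ holds the top-$k$ interpolated scores of the documents processed so far, and observe that at the break the guard $s_{best} \leq s_{min}$ bounds \emph{all} unseen documents at once since $\phi_S(q, d_j) \leq \phi_S(q, d_i)$ and $\phi_D(q, d_j) \leq s_D$ for every $j \geq i$. You merely make explicit what the paper leaves informal (the induction establishing the queue invariant, the $\alpha \in [0,1]$ monotonicity, the tie-handling, and the no-break termination case), so there is no substantive difference in approach.
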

\begin{proof}
    First, note that the sparse scores, $\phi_S(q, d_i)$, are already sorted in decreasing order for a given query. By construction, the priority queue $Q$ always contains the highest scores corresponding to the list parsed so far. Let, after parsing $k$ scores, $Q$ be full. Now the possible best score $s_{best}$ is computed using the sparse score found next in the decreasing sequence and the maximum of all dense scores, $s_D$ (cf. line~\ref{alg:early_stopping:sbest}). If $s_{best}$ is less than the minimum of the scores in $Q$, then $Q$ already contains the top-$k$ scores. To see this, note that the first component of $s_{best}$ is the largest among all unseen sparse scores (as the list is sorted) and $s_D$ is maximum of the dense scores by our assumption.
\end{proof}
Next, we show that a good approximation of the top-$k$ scores can be achieved by using the sample maximum. To prove our claim, we use the Dvoretzky–Kiefer–Wolfowitz (DKW)~\cite{massart1990tight} inequality.
\begin{lemma}
    \label{lem:dkw}
    Let $X_1, X_2, ..., X_n$ be $n$ real-valued independent and identically distributed random variables with the cumulative distribution function $F(\cdot)$. Let $F_n(\cdot)$ denote the empirical cumulative distributive function, i.e.
    \begin{equation}
        F_{n}(x)=\frac{1}{n} \sum_{i=1}^{n} \mathbbm{1}_{\left\{X_{i} \leq x\right\}}, \quad x \in \mathbb{R}.
    \end{equation}
    According to the DKW inequality, the following estimate holds:
    \begin{equation}
        \Pr \left( \sup_{x \in \mathbb{R}} \left( F_{n}(x) - F(x) \right) > \epsilon \right) \leq e^{-2n \epsilon^2} \forall \epsilon \geq \sqrt{\frac{1}{2n} \ln 2}.
    \end{equation}
\end{lemma}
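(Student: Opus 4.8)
The plan is to recognize Lemma~\ref{lem:dkw} as a verbatim restatement of the one-sided Dvoretzky--Kiefer--Wolfowitz inequality with the optimal constant, so the cleanest route in the paper is to invoke Massart's result~\cite{massart1990tight} directly. For completeness I would additionally sketch the standard argument that produces a bound of this shape, rather than reproduce the full optimal-constant analysis.

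First I would reduce to the uniform case. Applying the probability integral transform $U_i = F(X_i)$, the variables $U_i$ are i.i.d.\ uniform on $[0,1]$ whenever $F$ is continuous, and $\sup_{x} (F_n(x) - F(x))$ has the same distribution as $\sup_{t \in [0,1]} (G_n(t) - t)$, where $G_n$ is the empirical c.d.f.\ of the $U_i$. For a general (possibly discontinuous) $F$ one passes to the generalized inverse; the supremum is then taken over a coarser set of points, so the tail can only decrease and the uniform bound still applies. This step removes the dependence on the unknown $F$ and recasts the claim as a statement about the one-sided uniform empirical process.

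Next I would bound the upper tail of $D_n^+ = \sup_{t} (G_n(t) - t)$. Because $G_n$ is piecewise constant, the supremum is attained at one of the order statistics $U_{(1)} \le \dots \le U_{(n)}$, so the continuum supremum reduces to a finite event. For a fixed threshold $t$, $n G_n(t)$ is $\mathrm{Binomial}(n, t)$, and a Chernoff bound gives an exponential estimate of order $e^{-2n\epsilon^2}$ on the deviation $G_n(t) - t > \epsilon$ at that single point. To promote this from a pointwise to a uniform statement, the classical device is a reflection / ballot-type argument on the path of the empirical process, controlling the \emph{first} time it crosses the shifted diagonal $t + \epsilon$; this converts the union over the $n$ jump points into a single boundary-crossing probability of the right exponential order.

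The hard part is the constant. A naive union bound over the $n$ order statistics followed by Chernoff only yields the original DKW estimate with an unspecified multiplicative factor $C$; obtaining exactly $e^{-2n\epsilon^2}$ (equivalently, the factor $1$ together with the validity range $\epsilon \ge \sqrt{(1/2n)\ln 2}$, which is precisely the regime $e^{-2n\epsilon^2} \le \tfrac12$ stated in the lemma) is Massart's refinement and requires a delicate optimization of the reflection argument. Since this optimal-constant analysis is purely technical and orthogonal to our contribution, I would state the inequality in the form above and defer its proof to~\cite{massart1990tight}.
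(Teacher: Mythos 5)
Your proposal matches the paper exactly: Lemma~\ref{lem:dkw} is stated there as a known result with no proof given, justified solely by the citation to Massart~\cite{massart1990tight}, which is precisely your primary route. Your supplementary sketch (probability integral transform to the uniform case, Chernoff bound at a fixed point, boundary-crossing argument for uniformity, and the observation that the validity range $\epsilon \geq \sqrt{\tfrac{1}{2n}\ln 2}$ is exactly the regime where $e^{-2n\epsilon^2} \leq \tfrac{1}{2}$) is accurate but goes beyond what the paper itself provides.
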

In the following we show that, if $s_D$ is chosen as the maximum of a large random sample drawn from the set of dense scores, then the probability that any given dense score, chosen independently and uniformly at random from the dense scores, is greater than $s_D$ is exponentially small in the sample size.
\begin{theorem}
    Let $x_1, x_2, ..., x_n$ be a real-valued independent and identically distributed random sample drawn from the distribution of the dense scores with the cumulative distribution function $F(\cdot)$. Let $z = \max{(x_1, x_2, ..., x_n)}$. Then, for every $\epsilon > \frac{1}{\sqrt{2n}} \ln 2$, we obtain
    \begin{equation}
        \label{eq:dkw_dense}
        \Pr(F(z) < 1 - \epsilon) \leq e^{-2n \epsilon^2}.
    \end{equation}
\end{theorem}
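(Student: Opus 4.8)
The plan is to reduce this statement directly to the DKW inequality of Lemma~\ref{lem:dkw}. The crucial observation is that $z = \max(x_1, x_2, \ldots, x_n)$ is the \emph{sample maximum}, so every sample point satisfies $x_i \leq z$. Consequently the empirical CDF evaluated at $z$ equals $1$, i.e.\ $F_n(z) = \frac{1}{n}\sum_{i=1}^{n} \mathbbm{1}_{\{x_i \leq z\}} = 1$. This single fact is what links the quantity $F(z)$ that we wish to control to the supremum deviation $F_n - F$ appearing in the DKW bound.

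Given this, I would bound the supremum from below by its value at the particular point $x = z$:
\[
    \sup_{x \in \mathbb{R}} \left( F_n(x) - F(x) \right) \;\geq\; F_n(z) - F(z) \;=\; 1 - F(z).
\]
The event whose probability we want to bound, $\{F(z) < 1 - \epsilon\}$, is equivalent to $\{1 - F(z) > \epsilon\}$. By the displayed inequality this event is contained in $\{\sup_x (F_n(x) - F(x)) > \epsilon\}$, so its probability is no larger. Invoking the one-sided estimate of Lemma~\ref{lem:dkw} on the latter event then yields the claimed bound $e^{-2n\epsilon^2}$, which establishes~\eqref{eq:dkw_dense}.

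There is no substantial obstacle here: the argument is essentially a one-line containment of events followed by a citation of the DKW lemma. The only point that genuinely requires care is reconciling the admissible range of $\epsilon$. Lemma~\ref{lem:dkw} guarantees the estimate for $\epsilon \geq \sqrt{\tfrac{1}{2n}\ln 2}$, and I would verify that the hypothesis on $\epsilon$ stated in the theorem falls within this regime before applying the lemma, tightening the constant in the condition if necessary so that DKW may be invoked legitimately. Interpreting the conclusion, $F(z)$ is the fraction of the dense-score distribution lying at or below the observed sample maximum; the bound says this fraction is at least $1 - \epsilon$ except with probability exponentially small in the sample size $n$. This is precisely the guarantee needed to justify substituting the observed maximum for the unknown true upper limit $s_D$ in the early-stopping criterion of Algorithm~\ref{alg:early_stopping}.
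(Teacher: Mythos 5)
Your proposal is correct and matches the paper's proof essentially verbatim: both hinge on the observation that $F_n(z) = 1$ for the sample maximum and then invoke the one-sided DKW bound of Lemma~\ref{lem:dkw}, with your event-containment step merely spelling out what the paper's ``by Lemma~\ref{lem:dkw}, we infer'' leaves implicit. Your caution about the admissible range of $\epsilon$ is also well placed, since the theorem's stated condition $\epsilon > \frac{1}{\sqrt{2n}}\ln 2$ is in fact weaker than the DKW requirement $\epsilon \geq \sqrt{\frac{1}{2n}\ln 2}$ (as $\ln 2 < \sqrt{\ln 2}$), an apparent typo in the paper that your tightened condition repairs.
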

\begin{proof}
    Let $F_n(\cdot)$ denote the empirical cumulative distribution function as above. Specifically, $F_n(x)$ is equal to the fraction of variables less than or equal to $x$. We then have $F_n(z) = 1$. By Lemma~\ref{lem:dkw}, we infer
    \begin{equation}
        \Pr(F_n(z) - F(z) > \epsilon) \leq e^{-2n \epsilon^2}.
    \end{equation} 
    Substituting $F_n(z) = 1$, we obtain Equation~\eqref{eq:dkw_dense}.
\end{proof}
This implies that the probability of any random variable $X$, chosen randomly from the set of dense scores, being less than or equal to $s_D$, is greater than or equal to $1 - \epsilon$ with high probability, i.e.
\begin{equation}
    \Pr(P_D(X \leq s_D) \geq 1 - \epsilon) \geq 1 - e^{-2n \epsilon^2},
\end{equation}
where $P_D$ denotes the probability distribution of the dense scores. This means that, as our sample size grows until it reaches $k$, the approximation improves.
Note that, in our case, the dense scores are sorted (by corresponding sparse score) and thus the i.i.d.\ assumption can not be ensured. However, we observed that the dense scores are positively correlated with the sparse scores. We argue that, due to this correlation, we can approximate the maximum score well.

\section{Evaluation Setup}
\label{sec:experiments}
We consider the following baselines:

\textbf{Lexical or sparse retrievers} rely on term-based matching between queries and documents. We consider BM25, which uses term-based retrieval signals, and \deepct{}~\cite{dai_first_webconf_2020}, which is similar to BM25, but the term weights are learned in a contextualized fashion.

\textbf{Semantic or dense retrievers} retrieve documents that are semantically similar to the query in a common embedding space. We consider \tct{}~\cite{lin2020distilling} and \ance{}~\cite{xiong2021approximate}. Both approaches are based on BERT encoders. Large documents are split into passages before indexing (maxP). These dense retrievers use exact (brute-force) nearest neighbor search as opposed to approximate nearest neighbor (ANN) search.

\textbf{Hybrid retrievers} interpolate sparse and dense retriever scores. We consider \clear{}~\cite{gao2020complementing}, a retrieval model that complements lexical models with semantic matching. Additionally, we consider the hybrid strategy described in Section~\ref{sec:hybrid} as a baseline, using the dense retrievers above.

\textbf{Contextual dense re-rankers} operate on the documents retrieved by a sparse retriever (\bm{}). Each query-document pair is input into the re-ranker, which outputs a corresponding score. In this paper, we use a \cls{} re-ranker, where the output corresponding to the classification token is used as the score. Note that re-ranking is performed using the full documents (i.e.\ documents are not split into passages). If an input exceeds $512$ tokens, it is truncated.

\paragraph{Datasets and Hyperparameters}
We conduct experiments on three datasets from the TREC Deep Learning track, \trecdldf{}, \trecdlds{} and \trecdlp{}, to evaluate the effectiveness and efficiency of retrieval and re-ranking strategies on the MS MARCO collection. Each test set has a total of 200 queries. We use the \pyserini{} toolkit~\cite{lin2021pyserini} for our retrieval experiments and the MS MARCO development set to determine $\alpha = 0.2$ for \tct{}, $\alpha = 0.5$ for \ance{} and $\alpha = 0.7$ for \cls{}. Latency is computed as the sum of scoring, interpolation and sorting cost. Tokenization cost is ignored. We report the average processing time per query in the test set. Where applicable, dense models use a batch size of 256. More details can be found in Appendix~\ref{app:hardware}.

\section{Experimental Results}
\label{sec:result}
In this section we perform large-scale experiments to show the effectiveness and efficiency of the proposed \fastforward{} indexes. 

\paragraph{RQ1. How does interpolation-based re-ranking with dual-encoders compare to other methods?}
\begin{table*}
    \centering
    \begin{tabular}{lccccccccc}
        \toprule
            & \multicolumn{3}{c}{\trecdldf}
            & \multicolumn{3}{c}{\trecdlds}
            & \multicolumn{3}{c}{\trecdlp} \\
            \cmidrule(lr){2-4}
            \cmidrule(lr){5-7}
            \cmidrule(lr){8-10}
            & $\text{AP}_\text{1k}$ & $\text{R}_\text{1k}$ & $\text{nDCG}_\text{10}$
            & $\text{AP}_\text{1k}$ & $\text{R}_\text{1k}$ & $\text{nDCG}_\text{10}$
            & $\text{AP}_\text{1k}$ & $\text{R}_\text{1k}$ & $\text{nDCG}_\text{10}$ \\
        \midrule
        \multicolumn{10}{l}{\bf \sparseretrieval} \\
        \bm             & 0.331     & 0.697     & $0.519^{1-3}$
                        & 0.404     & 0.809     & $0.527^{1-3}$
                        & 0.301     & 0.750     & $0.506^{1-3}$ \\
        \deepct         & -         & -         & 0.544
                        & -         & -         & -
                        & 0.422     & 0.756     & 0.551 \\
        \midrule
        \multicolumn{10}{l}{\bf \denseretrieval} \\
        \tct            & 0.279     & 0.576     & $0.612^{1}$
                        & 0.372     & 0.728     & $0.586^{1,2}$
                        & 0.391     & 0.792     & 0.670 \\
        \ance           & 0.254     & 0.510     & $0.633^{1}$
                        & 0.401     & 0.681     & 0.633
                        & 0.371     & 0.755     & 0.645 \\
        \midrule
        \multicolumn{10}{l}{\bf \hybrid} \\
        \clear          & -         & -         & -
                        & -         & -         & -
                        & 0.511     & 0.812     & 0.699 \\
        \midrule
        \multicolumn{10}{l}{\bf \reranking} \\
        \tct            & 0.370     & 0.697     & 0.685
                        & 0.414     & 0.809     & 0.617
                        & 0.423     & 0.750     & 0.694 \\
        \ance           & 0.336     & 0.697     & 0.654
                        & 0.426     & 0.809     & 0.630
                        & 0.389     & 0.750     & 0.679 \\
        \cls            & 0.283     & 0.697     & $0.520^{1-3}$
                        & 0.329     & 0.809     & $0.522^{1-3}$
                        & 0.353     & 0.750     & $0.578^{1,2}$ \\
        \midrule
        \multicolumn{10}{l}{\bf \interpolatedreranking} \\
        $\tct^1$        & 0.406     & 0.697     & 0.696
                        & 0.469     & 0.809     & 0.637
                        & 0.438     & 0.750     & 0.708 \\
        $\ance^2$       & 0.387     & 0.697     & 0.673
                        & 0.490     & 0.809     & 0.655
                        & 0.417     & 0.750     & 0.680 \\
        $\cls^3$        & 0.365     & 0.697     & 0.612
                        & 0.460     & 0.809     & 0.626
                        & 0.378     & 0.750     & 0.617 \\
        \bottomrule
    \end{tabular}
    \caption{Retrieval performance. Retrievers use depths $k_S = 1000$ (sparse) and $k_D = 10000$ (dense). Dense retrievers retrieve passages and perform maxP aggregation for documents. Scores for \clear{} and \deepct{} are taken from the corresponding papers~\cite{gao2020complementing,gao2021coil}. Superscripts indicate statistically significant improvements using two-paired tests with a sig.\ level of $95\%$~\cite{paired_significance_test}.}
    \label{tab:model_eval}
\end{table*}
In Table~\ref{tab:model_eval}, we report the performance of sparse, dense and hybrid retrievers, re-rankers and interpolation.

First, we observe that dense retrieval strategies perform better than sparse ones in terms of nDCG, but have poor recall except on \trecdlp{}. The contextual weights learned by \deepct{} are better than tf-idf based retrieval (\bm{}), but fall short of dense semantic retrieval strategies (\tct{} and \ance{}). However, the overlap among retrieved documents is rather low, reflecting that dense retrieval cannot match query and document terms well.

Second, dual-encoder-based (\tct{} and \ance{}) perform better than contextual (\cls{}) re-rankers. In this setup, we first retrieve $k_S = 1000$ documents using a sparse retriever and re-rank them. This approach benefits from high recall in the first stage and promotes the relevant documents to the top of the list through the dense semantic re-ranker. However, re-ranking is typically time-consuming and requires GPU acceleration. The improvement of \tct{} and \ance{} over \cls{} also suggests that dual-encoder-based re-ranking strategies are better than cross-interaction-based methods. However, the difference could also be attributed to the fact that \cls{} does not follow the maxP approach (cf. Section~\ref{sec:problem}).

Finally, interpolation-based re-ranking, which combines the benefits of sparse and dense scores, significantly outperforms the \cls{} re-ranker and dense retrievers. Recall that dense re-rankers operate solely based on the dense scores and discard the sparse \bm{} scores of the query-document pairs. The superiority of interpolation-based methods is also supported by evidence from recent studies~\cite{chang2020pre,chen2021co,gao2020complementing,gao2021coil}.

\paragraph{RQ2. Do \fastforward{} indexes allow for efficient interpolation at higher retrieval depths?}
\begin{table*}
    \centering
    \resizebox{\linewidth}{!}{
    \begin{tabular}{lccccccccccccc}
        \toprule
        &
        & \multicolumn{6}{c}{\trecdldf}
        & \multicolumn{6}{c}{\trecdlds}
        \\
        \cmidrule(lr){3-8}
        \cmidrule(lr){9-14}
        & \multirowcell{2}[-0.5ex]{millisec.\\per query}
        & \multicolumn{3}{c}{$k_S = 1000$}
        & \multicolumn{3}{c}{$k_S = 5000$}
        & \multicolumn{3}{c}{$k_S = 1000$}
        & \multicolumn{3}{c}{$k_S = 5000$}
        \\
        \cmidrule(lr){3-5}
        \cmidrule(lr){6-8}
        \cmidrule(lr){9-11}
        \cmidrule(lr){12-14}
        &
        & $\text{AP}_\text{1k}$ & $\text{R}_\text{1k}$ & $\text{nDCG}_\text{20}$
        & $\text{AP}_\text{1k}$ & $\text{R}_\text{1k}$ & $\text{nDCG}_\text{20}$
        & $\text{AP}_\text{1k}$ & $\text{R}_\text{1k}$ & $\text{nDCG}_\text{20}$
        & $\text{AP}_\text{1k}$ & $\text{R}_\text{1k}$ & $\text{nDCG}_\text{20}$
        \\
        \midrule
        \multicolumn{10}{l}{\bf \hybrid} \\
        \bm, \tct
        & \cpu{582}
        & 0.394         & 0.697         & 0.655 & 0.385         & 0.729 & 0.645 
        & 0.463         & 0.809         & 0.615 & 0.469         & 0.852 & 0.621 
        \\
        \bm, \ance
        & \cpu{582}
        & 0.379         & 0.697         & 0.633 & 0.373         & 0.727 & 0.628
        & 0.479         & 0.809         & 0.624 & 0.488         & 0.846 & 0.632 
        \\
        \midrule
        \multicolumn{10}{l}{\bf \reranking} \\
        \tct
        & \gpu{1189} + \cpu{2}
        & 0.370             & 0.697         & 0.632         & 0.334     & 0.703     & $0.609^{1}$ 
        & 0.414             & 0.809         & $0.587^{1}$   & 0.405     & 0.794     & $0.585^{1,3,4}$ 
        \\
        \ance
        & \gpu{1189} + \cpu{2}
        & 0.336             & 0.697         & 0.614         & 0.304     & 0.647     & 0.607 
        & 0.426             & 0.809         & $0.595^{3}$   & 0.422     & 0.761     & 0.604 
        \\
        \cls
        & \gpu{185} + \cpu{2}
        & 0.283             & 0.697         & $0.494^{1-5}$ & 0.159     & 0.559     & 0.289 
        & 0.329             & 0.809         & $0.512^{1-5}$ & 0.221     & 0.727     & $0.375^{1-5}$ 
        \\
        \midrule
        \multicolumn{10}{l}{\bf \interpolatedreranking} \\
        $\tct^1$
        & \gpu{1189} + \cpu{14}
        & \hide{0.406}   & \hide{0.697}   & \hide{0.655} & \hide{0.411}   & \hide{0.745} & \hide{0.653}
        & \hide{0.469}   & \hide{0.809}   & \hide{0.621} & \hide{0.478}   & \hide{0.838} & \hide{0.626}
        \\
        \quad \fastforward
        & \cpu{253}
        & 0.406         & 0.697     & 0.655         & 0.411     & 0.745     & 0.653 
        & 0.469         & 0.809     & 0.621         & 0.478     & 0.838     & 0.626 
        \\
        \quad \quad coalesced$^2$
        & \cpu{109}
        & 0.379         & 0.697     & 0.630         & 0.379     & 0.732     & 0.625 
        & 0.440         & 0.809     & $0.594^{1}$   & 0.447     & 0.837     & 0.607 
        \\
        $\ance^3$
        & \gpu{1189} + \cpu{14}
        & \hide{0.387}   & \hide{0.697}   & \hide{0.638} & \hide{0.393}   & \hide{0.732} & \hide{0.639} 
        & \hide{0.490}   & \hide{0.809}   & \hide{0.630} & \hide{0.502}   & \hide{0.828} & \hide{0.640} 
        \\
        \quad \fastforward
        & \cpu{253}
        & 0.387         & 0.697     & 0.638     & 0.393     & 0.732     & 0.639 
        & 0.490         & 0.809     & 0.630     & 0.502     & 0.828     & 0.640 
        \\
        \quad \quad coalesced$^4$
        & \cpu{121}
        & 0.372         & 0.697     & 0.625     & 0.375     & 0.723     & 0.628 
        & 0.471         & 0.809     & 0.622     & 0.479     & 0.823     & 0.629 
        \\
        $\cls^5$
        & \gpu{185} + \cpu{14}
        & 0.365         & 0.697     & 0.585     & 0.357     & 0.708     & 0.562 
        & 0.460         & 0.809     & 0.602     & 0.459     & 0.839     & 0.601 
        \\
        \bottomrule
    \end{tabular}}
    \caption{Document retrieval performance. Latency is reported for $k_S = 5000$ on \cpu{CPU} and \gpu{GPU}. The coalesced \fastforward{} indexes are compressed to approximately 25\% of their original size. Hybrid retrievers use a dense retrieval depth of $k_D = 1000$. Superscripts indicate statistically significant improvements using two-paired tests with a sig.\ level of $95\%$~\cite{paired_significance_test}.}
    \label{tab:model_eval_doc}
\end{table*}
\begin{table}
    \centering
    \resizebox{\linewidth}{!}{
    \begin{tabular}{lccccccc}
        \toprule
        & \multirowcell{2}[-0.5ex]{millisec.\\per query}
        & \multicolumn{2}{c}{$k_S = 1000$}
        & \multicolumn{2}{c}{$k_S = 5000$}
        \\
        \cmidrule(lr){3-4}
        \cmidrule(lr){5-6}
        &
        & $\text{AP}_\text{1k}$ & $\text{RR}_\text{10}$
        & $\text{AP}_\text{1k}$ & $\text{RR}_\text{10}$
        \\
        \midrule
        \multicolumn{6}{l}{\bf \hybrid} \\
        \bm, \tct
        & \cpu{307}
        & 0.434             & 0.894         & 0.454         & 0.902
        \\
        \bm, \ance
        & \cpu{307}
        & 0.410             & 0.856         & 0.422         & 0.864
        \\
        \midrule
        \multicolumn{6}{l}{\bf \reranking} \\
        \tct
        & \gpu{186} + \cpu{2}
        & 0.426             & 0.827         & 0.439         & 0.842
        \\
        \ance
        & \gpu{186} + \cpu{2}
        & 0.389             & 0.836         & 0.392         & 0.857
        \\
        \cls
        & \gpu{185} + \cpu{2}
        & 0.353             & 0.715         & 0.275         & 0.576
        \\
        \midrule
        \multicolumn{6}{l}{\bf \interpolatedreranking} \\
        \tct
        & \gpu{186} + \cpu{14}
        & \hide{0.438}   & \hide{0.894}   & \hide{0.460}   & \hide{0.902}
        \\
        \quad \fastforward
        & \cpu{114}
        & 0.438         & 0.894         & 0.460         & 0.902
        \\
        \quad \quad early stopping
        & \cpu{72}
        & -             & 0.894         & -             & 0.902
        \\
        \ance
        & \gpu{186} + \cpu{14}
        & \hide{0.417}   & \hide{0.856}   & \hide{0.435}   & \hide{0.864}
        \\
        \quad \fastforward
        & \cpu{114}
        & 0.417         & 0.856         & 0.435         & 0.864
        \\
        \quad \quad early stopping
        & \cpu{52}
        & -             & 0.856         & -             & 0.864
        \\
        \cls
        & \gpu{185} + \cpu{14}
        & 0.378         & 0.809         & 0.392         & 0.832
        \\
        \bottomrule
    \end{tabular}}
    \caption{Retrieval performance on \trecdlp{}. Latency is reported for $k_S = 5000$ on \cpu{CPU} and \gpu{GPU}. Hybrid retrievers use a dense retrieval depth of $k_D = 1000$.}
    \label{tab:model_eval_passage}
\end{table}
Tables~\ref{tab:model_eval_doc} and~\ref{tab:model_eval_passage} show results of re-ranking, hybrid retrieval and interpolation on document and passage datasets, respectively. The metrics are computed for two sparse retrieval depths, $k_S = 1000$ and $k_S = 5000$.

We observe that taking the sparse component into account in the score computation (as is done by the interpolation and hybrid methods) causes performance to improve with retrieval depth. Specifically, some queries receive a considerable recall boost, capturing more relevant documents with large retrieval depths. Interpolation based on \fastforward{} indexes achieves substantially lower latency compared to other methods. Pre-computing the document representations allows for fast look-ups during retrieval time. As only the query needs to be encoded by the dense model, both retrieval and re-ranking can be performed on the CPU while still offering considerable improvements in query processing time. Note that for \cls{}, the input length is limited, causing documents to be truncated, similarly to the \emph{firstP} approach. As a result, the latency is much lower, but in turn the performance suffers. It is important to note here, that, in principle, \fastforward{} indexes can also be used in combination with firstP models.

The hybrid retrieval strategy, as described in Section~\ref{sec:hybrid}, shows good performance. However, as the dense indexes require nearest neighbor search for retrieval, the query processing latency is much higher than for interpolation using \fastforward{} indexes.

Finally, dense re-rankers do not profit reliably from increased sparse retrieval depth; on the contrary, the performance drops in some cases. This trend is more apparent for the document retrieval datasets with higher values of $k_S$. We hypothesize that dense rankers only focus on semantic matching and are sensitive to topic drift, causing them to rank irrelevant documents in the top-$5000$ higher.

\paragraph{RQ3. Can the re-ranking efficiency be improved by reducing the \fastforward{} index size using sequential coalescing?}
In order to evaluate this approach, we first take the pre-trained \tct{} dense index of the MS MARCO corpus, apply sequential coalescing with varying values for $\delta$ and evaluate each resulting compressed index using the \trecdldf{} testset.
\begin{figure}
    \centering
    \includegraphics[width=\linewidth]{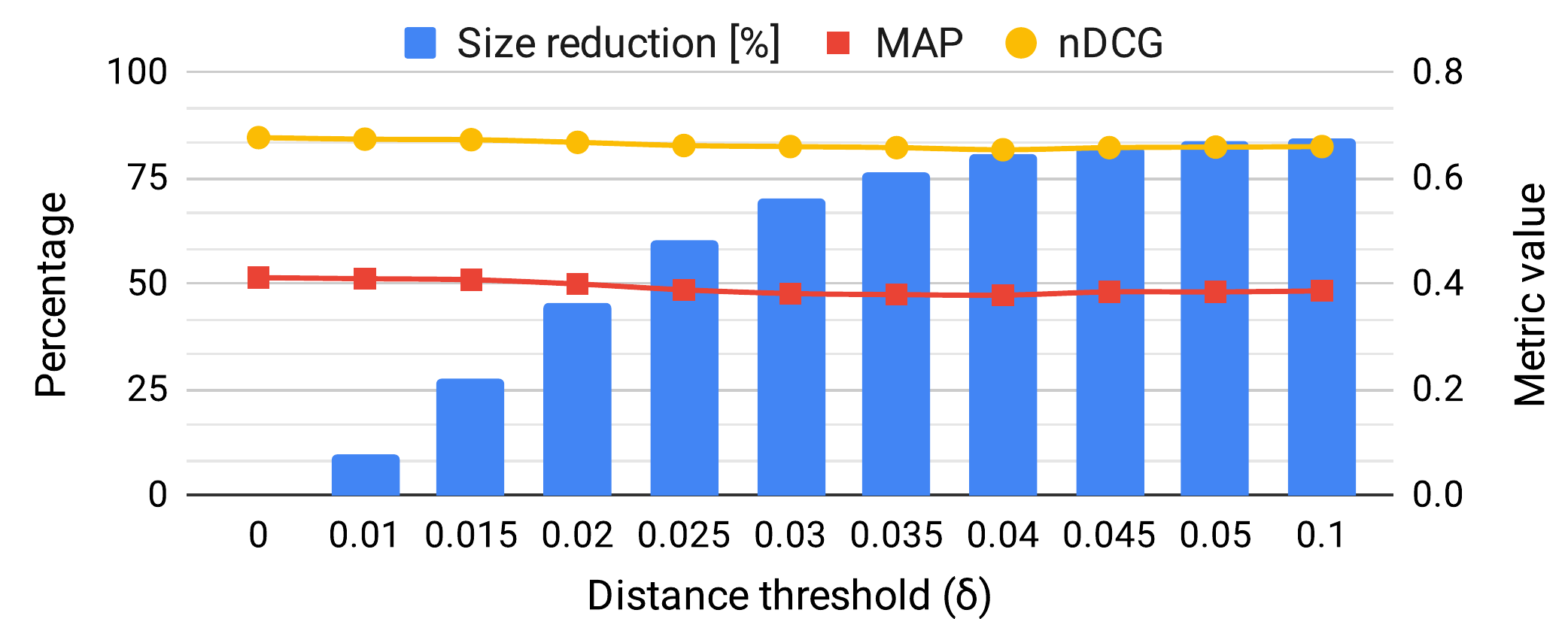}
    \caption{Sequential coalescing applied to \trecdldf{}. The plot shows the index size reduction in terms of the number of passages and the corresponding metric values for \fastforward{} interpolation with \tct{}.}
    \label{fig:coalescing_performance}
\end{figure}
The results are illustrated in Figure~\ref{fig:coalescing_performance}. It is evident that, by combining the passage representations, the number of vectors in the index can be reduced by more than 80\% in the most extreme case, where only a single vector per document remains. At the same time, the performance is correlated with the granularity of the representations. However, the drops are relatively small. For example, for $\delta = 0.025$, the index size is reduced by more than half, while the nDCG decreases by roughly $0.015$ (3\%).

Additionally, Table~\ref{tab:model_eval_doc} shows the detailed performance of coalesced \fastforward{} indexes on the document datasets. We chose the indexes corresponding to $\delta = 0.035$ (\tct{}) and $\delta = 0.003$ (\ance{}), both of which are compressed to approximately 25\% of their original size. This is reflected in the query processing latency, which is reduced by more than half. The overall performance drops to some extent, as expected, however, these drops are not statistically significant in all but one case. The trade-off between latency (index size) and performance can be controlled by varying the threshold $\delta$.

\paragraph{RQ4. Can the re-ranking efficiency be improved by limiting the number of \fastforward{} look-ups?}
\begin{figure}
    \centering
    \includegraphics[width=\linewidth]{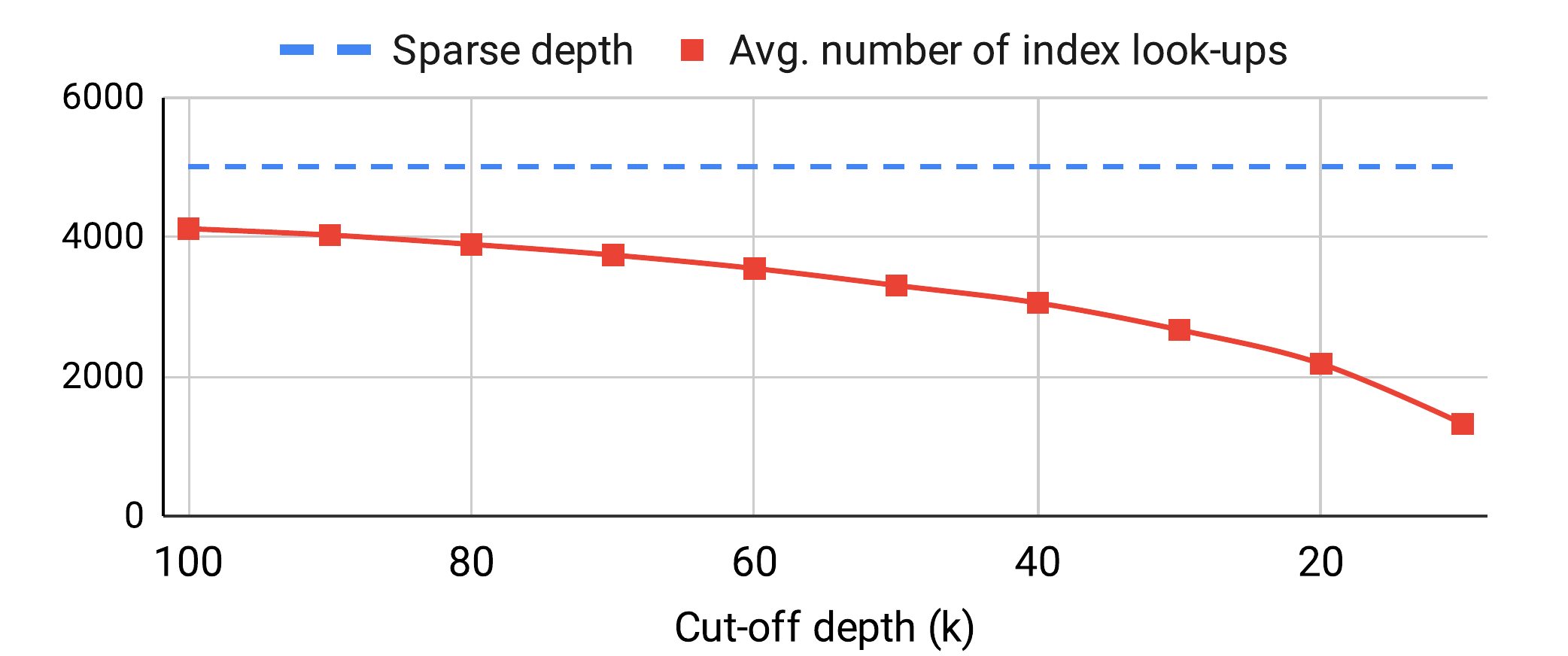}
    \caption{The average number of \fastforward{} index look-ups per query for interpolation with early stopping at varying cut-off depths $k$ on \trecdlp{} with $k_S = 5000$ using \ance{}.}
    \label{fig:early_stopping_lookups}
\end{figure}
We start by evaluating the utility of the early stopping approach described in Section~\ref{sec:early_stopping} on the \trecdlp{} dataset. Figure~\ref{fig:early_stopping_lookups} shows the average number of look-ups performed in the \fastforward{} index during interpolation w.r.t.\ the cut-off depth $k$. We observe that, for $k = 100$, early stopping already leads to a reduction of almost 20\% in the number of look-ups. Decreasing $k$ further leads to a significant reduction of look-ups, resulting in improved query processing latency. As lower cut-off depths (i.e.\ $k < 100$) are typically used in downstream tasks, such as question answering, the early stopping approach for low values of $k$ turns out to be particularly helpful.

Table~\ref{tab:model_eval_passage} shows early stopping applied to the passage dataset to retrieve the top-$10$ passages and compute reciprocal rank. It is evident that, even though the algorithm approximates the maximum dense score (cf. Section~\ref{sec:early_stopping}), the resulting performance is identical, which means that the approximation was accurate in both cases and did not incur any performance hit. Further, the query processing time is decreased by up to a half compared to standard interpolation. Note that early stopping depends on the value of $\alpha$, hence the latency varies between \tct{} and \ance{}.

\section{Conclusion}
\label{sec:conclusion}
In this paper we propose \fastforward{} indexes, a simple yet effective and efficient look-up-based interpolation method that combines document retrieval and re-ranking. \fastforward{} indexes are based on dense dual-encoder models, exploiting the fact that document representations can be pre-processed and stored, providing efficient access in constant time.
Using interpolation, we observe increased performance compared to hybrid retrieval. Further, we achieve improvements of up to 75\% in memory footprint and query processing latency due to our optimization techniques, \emph{sequential coalescing} and \emph{early stopping}. At the same time, our method solely requires CPU computations, completely eliminating the need for expensive GPU-accelerated re-ranking.

\begin{acks}
Funding for this project was in part provided by EU Horizon 2020 grant no.\ 871042 (\emph{SoBigData++}) and 832921 (\emph{MIRROR}) and BMBF grant no.\ 01DD20003 (\emph{LeibnizKILabor}).
\end{acks}

\bibliographystyle{ACM-Reference-Format}
\bibliography{references}

\clearpage
\appendix
\section{Experimental Details}
\label{app:exp_details}
In this section we provide details regarding our experiments to ensure reproducibility. This includes hardware, software, model hyperparameters as well as techniques employed.

\subsection{Hardware Configuration and Latency Measurements}
\label{app:hardware}
Our experiments are performed on a single machine using an Intel Xeon Silver 4210 CPU with 40 cores, 256GB of RAM and an NVIDIA Tesla V100 GPU. In order to measure the per-query latency numbers, we perform each experiment four times and report the average latency, excluding the first measurement (in order to account for any potential caching). In general, latency is reported as the sum of scoring (this includes operations like encoding queries and documents, obtaining representations from a \fastforward{} index, computing the scores as dot-products and so on), interpolation (cf.\ Equation~\eqref{eqn:interpolation}) and sorting cost. Any pre-processing or tokenization cost is ignored. Further, the first-stage (sparse) retrieval step is not included, as it is constant for all methods. The \fastforward{} indexes are loaded into the main memory entirely before they are accessed.

\subsection{Software and Hyperparameters}
\label{app:software}
\begin{table}
    \center
    \begin{tabular}{ll}
        \toprule
        \multicolumn{2}{c}{\ance} \\
        \midrule
        \trecdldf,                  & \texttt{castorini/ance-msmarco-doc-maxp} \\
        \trecdlds                   & \texttt{msmarco-doc-ance-maxp-bf} \\
        \midrule
        \multirowcell{2}{\trecdlp}  & \texttt{castorini/ance-msmarco-passage} \\
                                    & \texttt{msmarco-passage-ance-bf} \\
        \midrule
        \midrule
        \multicolumn{2}{c}{\tct} \\
        \midrule
        \trecdldf,                  & \texttt{castorini/tct\_colbert-msmarco} \\
        \trecdlds                   & \texttt{msmarco-doc-tct\_colbert-bf} \\
        \midrule
        \multirowcell{2}{\trecdlp}  & \texttt{castorini/tct\_colbert-msmarco} \\
                                    & \texttt{msmarco-passage-tct\_colbert-bf} \\
        \bottomrule
    \end{tabular}
    \caption{The pre-trained dense encoders and corresponding indexes we used in our experiments. In each cell, the first line corresponds to a pre-trained encoder (to be obtained from the \emph{HuggingFace Hub}) and the second line is a pre-built index provided by \pyserini{}.}
    \label{tab:encoders_indexes}
\end{table}
We use the \pyserini{} toolkit for all of our retrieval experiments, which uses the \emph{HuggingFace transformers} library internally. \pyserini{} provides a number of pre-trained encoders and corresponding indexes. Table~\ref{tab:encoders_indexes} gives an overview over the ones we used for our experiments. Our dense encoders (\ance{} and \tct{}) output 768-dimensional representations. Our sparse \bm{} retriever is provided by \pyserini{} as well. We use the pre-built indexes \texttt{msmarco-passage} ($k_1=0.82$, $b=0.68$) and \texttt{msmarco-doc} ($k_1=4.46$, $b=0.82$).

\end{document}